\newcommand{\real}{\mathbb{R}}
\newcommand{\beps}{\mbox{\boldmath$\epsilon$}}
\newcommand{\brho}{\mbox{\boldmath$\rho$}}
\newcommand{\bsig}{\mbox{\boldmath$\sigma$}}
\newcommand{\btau}{\mbox{\boldmath$\tau$}}
\newcommand{\bxi}{\mbox{\boldmath$\xi$}}
\newcommand{\bzeta}{\mbox{\boldmath$\zeta$}}
\def\PP{P}
\def\bx{\mathbf{x}}
\def\b{\mathbf{b}}
\def\f{\mathbf{f}}
\def\sig{\sigma}
\def\aa{a}
\def\cc{c}
\def\Lam{\Lambda}
\def\vsig{\varsigma}
\def\calZ{{\cal Z}}
\def\calU{{\cal U}}
\def\half{\frac{1}{2}}
\def\t{\mathbf{t}}
\def\bt{\mathbf{t}}
\def\u{\mathbf{u}}
\def\bu{\mathbf{u}}
\def\bv{\mathbf{v}}
\def\K{\mathbf{K}}
\def\bc{\mathbf{c}}
\def\bD{\mathbf{D}}
\def\bK{\mathbf{K}}
\def\RR{\mathbb{R}}
\def\Diag{\mbox{Diag}}
\def\bA{\mathbf{A}}
\def\ot{ \circ}
\def\ba{\mathbf{a}}
\def\bG{\mathbf{G}}
\def\bff{\mathbf{f}}
\def\calS{{\cal S}}
\def\calE{{\cal E}}
\def\calP{{\cal P}}
\def\bN{\mathbf{N}}
\def\dO{{{\rm d}\Omega}}
\def\dG{{{\rm d}\Gamma}}
\def\barbsig{\bar{\bsig}}
\def\barvsig{\bar{\vsig}}
\def\barbrho{\bar{\brho}}
\def\barbzeta{\bar{\bzeta}}
\def\eb{\begin{equation}}
\def\ee{\end{equation}}
\newtheorem{Algorithm}{Algorithm}
\begin{document}

\title*{Canonical Duality Theory for Topology Optimization}
\author{ David Yang Gao }
\institute{David Yang Gao \at Faculty  of Science \& Technology, Federation University Australia,\\ \email{d.gao@federation.edu.au}
 }

\maketitle
\vspace{-2cm}
\begin{abstract}\\
This paper presents a   canonical duality approach for solving a general topology optimization problem of  nonlinear elastic structures. By using finite element method, this most challenging problem can be formulated as a   mixed integer nonlinear programming problem (MINLP),
  i.e. for a given deformation, the  first-level optimization is a typical linear constrained  0-1 programming problem, while for a given structure, the second-level optimization is a general nonlinear continuous minimization problem in computational nonlinear elasticity.
 It is discovered that for linear elastic structures, first-level optimization is a typical Knapsack problem, which is considered to be NP-complete
  in computer science. However, by using canonical duality theory,  this well-known  problem  can be solved analytically to obtain exact integer solution. A perturbed canonical dual algorithm (CDT) is proposed and  illustrated by benchmark problems in topology optimization.
Numerical results show that the proposed CDT method produces desired  optimal structure  without any
gray elements. The checkerboard  issue  in traditional methods is much reduced.
\end{abstract}

\section{General Topology Optimization Problem and Challenges}
Topology optimization aims to distribute materials within a prescribed design domain in order to  obtain the best structural
performance under  certain geometric or physical constraints. Due to its broad  applications, the topology optimization has been subjected to 
extensively study since the seminal paper by Bendsoe and Kikuch~\cite{bb_Bendsoe88}.
 Generally speaking, a typical topology optimization problem  involves both continuous state variable  and discrete   density distribution
 that can take either the value 0 (void) or 1 (solid material) at any point in the design domain. Thus,  numerical discretization methods  (say FEM) for solving  topology optimization problems lead to a so-called  mixed integer nonlinear programming (MINLP)  problem, which appears  extensively in computational engineering, decision and management sciences, operations research, industrial and systems engineering \cite{gao-aip16}.

Let us consider  an elastically deformable body that in an undeformed configuration occupies an open  domain $\Omega\subset \real^d \; (d=2,3)$ with (Lipschitz) boundary $\Gamma = \partial \Omega $. We assume that the body is subjected to a body force ${\bf f}$ (per unit mass) in the reference domain $\Omega$  and a given surface traction $\bt(\bx)$ of dead-load type  on the boundary ${\Gamma_t} \subset \partial \Omega $, while the body is fixed on the remaining boundary ${\Gamma_u} = \partial  \Omega\cap {\Gamma_t}$.
%
%
Based on the minimal potential principle in continuum mechanics, the topology optimization of compliance minimization problem of this elastic body can be formulated in the following coupled minimization problem
\begin{equation}
(\calP): \;\; \min_{\bu \in \calU_a}  \min_{\rho \in {\cal Z}}
\left\{ \Pi(\bu, \rho) = \int_{\Omega}
W(\nabla \bu) \rho {\rm d} \Omega + \int_{\Omega} \bu \cdot {\bf f}  \rho
{\rm d} \Omega - \int_{\Gamma_t} \bu \cdot \bt {\rm d} \Gamma
\right\} , \label{pprobm}
\end{equation}
where the unknown   $\bu : \Omega \rightarrow \real^d$ is a displacement  vector field, the design variable $\rho(\bx) \in \{ 0, 1\}$ is a discrete  scalar  field, the stored  energy per unit reference volume $W(\bD)$ is a nonlinear differentiable function of the deformation gradient $\bD=\nabla \bu$.  The notation $\calU_a$ identifies a \emph{kinematically admissible space} of deformations,  in which, certain  geometrical/boundary conditions are given, and
\[
{\cal Z} = \left\{ \rho(\bx): \Omega \rightarrow  \{ 0, 1  \} | \;\; \int_\Omega \rho(\bx) \dO \le V_c \right \}
\]
is a design feasible space, in which, $V_c > 0$ is the desired  volume.

Mathematically speaking, the topology optimization $(\calP)$ is a coupled nonlinear-discrete minimization problem in infinite-dimensional space. For large deformation problems, the stored energy $W(\bD)$ is usually nonconvex. The criticality condition of this minimization problem leads to a nonlinear system of  highly coupled partial differential equations. It is fundamentally difficult to analytically solve this type of
problems. Numerical methods must be adopted.

Finite element method is the most popular numerical approach for topology optimization, by which, the domain $\Omega$ is divided into $n$ disjointed elements $\{\Omega_e\}$ and in each element, the unknown fields can be numerically discretized as
\begin{equation}\label{eq-urho}
\bu(\bx) = \bN_e(\bx) \bu_e , \;\; \rho(\bx) = \rho_e \in \{0,  1 \} \;\; \forall \bx \in \Omega_e,
\end{equation}
where $\bN_e$ is an interpolation matrix, $\bu_e$ is a nodal displacement vector, the binary design variable $\rho_e \in \{ 0,1\}$ is used for determining whether the element $\Omega_e $ is a void ($\rho_e = 0$) or a solid ($\rho_e = 1$). Thus, by substituting (\ref{eq-urho}) into $\Pi(\bu, \rho)$ and let $\calU^m_a \subset \real^m$   be an admissible  nodal  displacement space,
\eb
 \calZ_a =
 \left \{ \brho = \{ \rho_e \} \in   \{ 0, 1\}^n |
  \;\; V(\brho) = \sum_{e=1}^n \rho_e \Omega_e \le V_c \right\},
\ee
the variational problem $(\calP)$ can be numerically reformulated the following global optimization problem
\begin{equation}
 (\calP_h): \;\;\; \min_{\bu \in \calU^m_a} \min_{\brho \in \calZ_a }
 \left\{ \Pi_h(\bu, \brho) = C(\brho,\bu)  - \bu^T \bff(\brho) \right\},
\end{equation}
where
\eb
C(\brho, \bu)
= \brho^T \bc(\bu), \;\; \bc(\bu) =  \left\{
 \int_{\Omega_e}  W(\nabla \bN(\bx) \bu_e) \dO \right\} \in \real^n, \label{eq-cu}
 \ee
\eb
\bff(\brho) =   \left\{  \int_{\Omega_e}   \rho_e \bN_e(\bx)^T\b_e(\bx) \dO   \right\}
+
\left\{ \int_{\Gamma^e_t} \bN(\bx)^T \t(\bx) \dG \right\}  \; \in \real^m.
\ee
Clearly, this discretized topology optimization  involves both the continuous variable $\bu \in \calU^m_a$ and the  integer variable $\brho \in \calZ_a$, it  is the so-called {\em mixed integer nonlinear programming problem} (MINLP) in mathematical programming. Since $\rho_e^p= \rho_e \;\; \forall \rho_e \in \{ 0, 1\}, \;\; \forall p \in \real,$ we have
\eb
  C_p(\brho, \bu) := \sum_{e=1}^n \rho_e^p c_e(\bu)= (\;
  \underbrace{ \brho \ot \dots \ot \brho }_{ p \mbox{ times }}  \;)^T \bc(\bu) =  C(\brho, \bu)
  \;\;\;  \forall p \in \real,
\ee
  where $\brho \ot \bc = \{ \rho_e c_e \} $ represents the Hadamard product.
  Particularly, for $p=2$, we write
   \eb
   C_2(\brho, \bu) := \half \brho^T \bA(\bu) \brho  , \;\;\; \bA(\bu) = 2 \Diag\{ \bc(\bu) \}.
  \ee
Clearly,  $C_2(\brho, \bu)$ is a convex function of $\brho$ since
$\bA(\bu)  \succeq 0  \;\; \forall \bu \in \calU^m_a$.
By the facts that $\brho \in \calZ_a$ is the main design variable and the displacement $\bu$ depends on each given domain $\Omega$, the problem $(\calP_h)$ is actually a so-called bi-level programming problem:
\begin{eqnarray}
 (\calP_{bl}):\;\; &  \;\;\;\;\;\; &  \min_{\brho\in \calZ_a} \min_{\bu \in \calU^m_a}
\{  C_p(\brho, \bu)   - \bu^T \bff(\brho)  \} \\ 
 &s.t. &  \bu =\arg \min_{\bv \in \calU^m_a } \Pi_h(\bv, \brho). \label{eq-llopt}
\end{eqnarray}
In this formulation, $C_p(\brho, \bu)   - \bu^T \bff(\brho) $ represents the upper-level cost function and the total potential energy $\Pi_h(\bu, \brho)$ represents the lower-level cost function. For large deformation problems, the total potential energy $\Pi_h$  is usually a nonconvex function of $\bu$. Therefore, this bi-level optimization could be the most challenging problem in global optimization.

For linear elastic structures, the total potential energy $\Pi_h$ is a quadratic function of $\bu$
\eb
\Pi_h(\bu, \brho) = \half \bu^T \bK(\brho) \bu - \bu^T \bff(\brho)
\ee
where $ \bK(\brho) = \left\{ \rho_e \bK_e \right\} \in \real^{m\times m} $ is the overall stiffness matrix, which is obtained by assembling the sub-matrix $\rho_e \bK_e$ for each element $\Omega_e$.
In this case, the lower-level optimization (\ref{eq-llopt}) is a convex minimization and for each given upper-level design variable $\brho$, the lower-level solution is simply governed  by the linear equilibrium equation
$\bK(\brho) \bu = \bff(\brho).$
 Therefore,
  the topology optimization for linear elasticity is mathematically an linear constrained
integer programming problem:
\eb\label{eq-le}
(\calP_{le}): \;\;  \min_{\brho \in \calZ_a} \min_{\bu \in \calU^m_a}
\left\{  -  \half \bu^T \bK(\brho) \bu   | \;\;\; \bK(\brho) \bu  = \bff(\brho)\right\} .
\ee

Due to the integer constraint, to solve this mixed integer quadratic minimization problem is fundamentally difficult.
 In order to overcome the combinatorics complexity  in this problem, various approximations were proposed during  the last decades, including homogenization~\cite{bb_Bendsoe88}, density-based approximations \cite{bb_Bendsoe89}, level set method \cite{bb_VanDijk2013}, and  topological derivative~\cite{bb_Sokolowski99}, etc . These approaches generally relax the MINLP problem into a continuous parameter optimization problem by using size, density or shape, and then solve it based on the traditional Newton-type (gradient-based)  or evolutionary optimization algorithms.
A comprehensive survey on these  approaches was  given in \cite{bb_Sigmund2013}.

The so-called Simplified Isotropic Material with Penalization (SIMP) is one of the most popular approaches in topology optimization:
\begin{eqnarray}\label{eq-simp}
(SIMP): \;\; & &
\min_{\brho\in\RR^N}  C_p ( \brho, \u(\brho) ) \\
& s.t. & \;\; \K( \brho^p)\u =\f( \brho ),  \;\;
V(\brho) \leq V_c, \\
& & \; \;  0< \rho_e\leq 1, \ e=1,\ldots, n
\end{eqnarray}
where $p$ is the so-called penalization parameter in topology optimization.
The SIMP formulation has been studied extensively in topology optimization and numerous research papers have been produced during the past decades. By the fact that  $ \brho^p  =  \brho \;\; \forall p \in \real, \;\; \forall \brho \in \{ 0, 1\}^n$, we can see  that the integer constraint $\brho \in \{ 0, 1\}^n$ in $(\calP_{le})$ is  simply replaced by the box constraint $\brho \in (0, 1]^n$.
Although it was discovered by engineers that  the  ``magic number" $ p = 3 $ can  ensure good convergence to almost $0$-$1$ solutions, the SIMP formulation    is not mathematically equivalent to the topology optimization problem $(\calP_{le})$. Actually, in many real-world applications, most SIMP solutions $\{ \rho_e\}$ are only approximate to $0$ or $1$ but never be  exactly $0$ or $1$.
Correspondingly, these elements are in gray scale which have to be filtered or interpreted physically. Additionally, this method suffers some key limitations such as the unsure  global optimization, many  gray scale elements  and checkerboard patterns, etc.

\section{Canonical Dual Problem and Analytical Solution}
  Canonical dual finite element methods for solving elasto-plastic structures and large deformation problems have been studied since 1988  \cite{gao-cs88,gao-jem96}. Applications to nonconvex mechanics are  given recently
for post-buckling problems \cite{ali-gao, santos-gao}. This paper  will address the canonical duality theory for solving the challenging integer programming problem in $(\calP_u)$.

Let  $\ba = \{ \aa_e = \mbox{Vol} (\Omega_e) \} \in \real^n$, where $\mbox{Vol}(\Omega_e)$ represents the volume of  each element $\Omega_e$. Then we have $\calZ_a = \{ \brho \in \{ 0, 1\}^n | \;\; \brho^T \ba \le V_c\} $.
By the fact  that
$ \min_{\brho } \min_{\bu } = \min_{\bu } \min_{\brho } $,
 the alternative iteration can be adopted for solving the topology optimization problem.
Since $ C_1(\brho,  \bu)  =  \half \bu^T  \bK(\brho) \bu  =   \brho^T   \bc(\bu)  $,
  for a given  solution of (\ref{eq-llopt}), the energy  vector
 $\bc_u = \bc(\bu) \in \real^n_+   $   is non-negative.
Thus,  the iterative method for  linear elastic topology optimization  $(\calP_{le})$ can be proposed for solving the following
linear 0-1 programming problem   ($(\calP) $ for short) :
\eb\label{eq-low}
 (\calP): \;\;  \min  \left\{ P_u(\brho) =-  \bc_u^T   \brho  \;\; 
  | \;\;\brho \in \{ 0, 1\}^n,  \;\; \brho^T \ba \le V_c \right\}.
\ee
This is the well-known  Knapsack problem. Due to the 0-1 constraint,
even this most simple linear integer programming  is listed as one of Karp's 21 NP-complete problems~\cite{karp}.
 However, this challenging problem can be solved analytically by using the canonical duality theory.

 The canonical duality theory for  general integer programming   was first proposed by Gao
in 2007 \cite{gao-jimo07}. The key idea of this theory is the introducing of a canonical measure
\eb
\bxi=  \Lam(\brho) = \{
\brho \ot  \brho - \brho, \;\; \brho^T \ba -V_c \} : \;\; \real^n \rightarrow
  \calE =  \real^{n+1}.
  \ee
  Let
  \eb
  \calE_a := \{ \bxi= \{ \beps, \nu\} \in \real^{n+1}|\;\; \beps \le 0, \;\; \nu \le 0 \}
\ee
be a convex cone in $\real^{n+1}$. Its indicator $\Psi(\bxi)$  is defined by
\[
\Psi(\bxi) = \left\{ \begin{array}{ll}
0 & \mbox{ if } \bxi \in \calE_a \\
+\infty & \mbox{ otherwise}
\end{array}
\right.
\]
which  is a convex and   lower semi-continuous (l.s.c) function in $\real^{n+1}$. By this function, the primal problem can be relaxed in the following unconstrained minimization form:
\eb
\min  \left\{ \Phi(\brho) = P_u(\brho )  
+ \Psi(\Lam(\brho)) \;\;
  | \;\;  \brho   \in \real^n  \right\}.
\ee
Due to the convexity of   $\Psi(\bxi)$, its conjugate function can be defined uniquely  by the Fenchel transformation:
\eb
\Psi^*(\bzeta) = \sup_{\bxi \in \real^{n+1}} \{ \bxi^T \bzeta - \Psi(\bxi) \}
=\left\{ \begin{array}{ll}
0 & \mbox{ if } \bzeta \in \calE_a^*\\
+\infty & \mbox{ otherwise}
\end{array} \right.
\ee
where
$\calE_a^* = \{ \bzeta = \{ \bsig, \vsig\} \in \real^{n+1} | \;\; \bsig \ge 0, \;\; \vsig \ge 0 \} $
is the  dual space of $\calE_a$. Thus, by using the Fenchel-Young equality $\Psi(\bxi) + \Psi^*(\bzeta) = \bxi^T \bzeta$, the function $\Phi(\brho)$ can be written in the  Gao-Strang total complementary function \cite{gs-89}
 \eb
 \Xi(\brho, \bzeta) = P_u(\brho ) 
 +  \Lam(\brho)^T \bzeta - \Psi^*(\bzeta).
 \ee

Based on this function, the canonical dual of $\Phi(\brho)$ can be defined by
 \eb
 \Phi^d(\bzeta) = \mbox{sta } \{ \Xi(\brho, \bzeta) | \;\; \brho \in \real^m  \} =
 P^\Lam_u(\bzeta ) - \Psi^*(\bzeta)
 \ee
where $\mbox{sta } \{ f(x) |\; x \in X \} $ stands for finding a stationary value of
$ f(x) \;\; \forall x\in X$, and
 \eb
P^\Lam_u(\bzeta ) = \mbox{sta } \{
   \Lam(\brho)^T \bzeta +   P_u(\brho )  \} = - \frac{1}{4} \btau_u^T(\bzeta ) \bG^{-1}(\bzeta)  \btau_u(\bzeta ) - \vsig V_c
   \ee
is the $\Lam$-conjugate of $ P_u(\brho )$, in which,
\[
\bG (\bzeta) = \ \Diag\{ \bsig\}  , \;\;\;  \btau_u( \bzeta ) =    \bsig - \vsig \ba +  \bc_u .
\]
Clearly, $P_u^\Lam(\bzeta)$ is well-defined if $ \det \bG  \neq 0  $, i.e.  $\bsig \neq 0 \in \real^n $.
 Let $\calS_a = \{ \bzeta \in \calE^*_a  | \;\; \det \bG  \neq 0 \}$.
We have the  following standard result in the canonical duality theory:
\begin{theorem}[Complementary-Dual Principle] For a  given
 $\bu \in \calU^m_a$, if $(\barbrho, \barbzeta) $ is a KKT point of $\Xi$, then
$\barbrho$ is a KKT point of $\Phi$, $\barbzeta$ is a KKT point of $\Phi^d$, and
\eb
\Phi(\barbrho) = \Xi(\barbrho, \barbzeta)=\Phi^d(\barbzeta).\label{eq-cdp}
\ee
\end{theorem}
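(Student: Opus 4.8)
The plan is to reduce the single hypothesis ``$(\barbrho,\barbzeta)$ is a KKT point of $\Xi$'' to one explicit system of equalities and inequalities, and then to recognize this same system as (i) the KKT system of the constrained problem defining $\Phi$, with $\barbzeta$ playing the role of Lagrange multipliers, and (ii) the KKT system of the canonical dual $\Phi^d$, with $\barbrho$ recovered from $\barbzeta$ by stationarity. The energy identity (\ref{eq-cdp}) will then follow from the Fenchel--Young equality alone.

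First I would write out the KKT conditions of $\Xi(\brho,\bzeta)=P_u(\brho)+\Lam(\brho)^T\bzeta-\Psi^*(\bzeta)$ at $(\barbrho,\barbzeta)$. Stationarity in the free variable $\brho$ gives $\nabla_\brho\Xi = -\bc_u + 2\bG(\barbzeta)\barbrho - \barbsig + \barvsig\ba = 0$, i.e. $2\bG(\barbzeta)\barbrho=\btau_u(\barbzeta)$. Since $\Psi^*$ is the indicator of the dual cone $\calE_a^*$, the conditions in $\bzeta$ reduce to $\barbzeta\in\calE_a^*$ together with the inclusion $\Lam(\barbrho)\in\partial\Psi^*(\barbzeta)$; because $\Psi$ and $\Psi^*$ are both indicators, this inclusion is equivalent --- through the Fenchel--Young equality $\Psi(\bxi)+\Psi^*(\bzeta)=\bxi^T\bzeta$ --- to the three statements $\Lam(\barbrho)\in\calE_a$, $\barbzeta\in\calE_a^*$ and $\Lam(\barbrho)^T\barbzeta=0$. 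Written out, these are the primal feasibility $\barbrho\ot\barbrho-\barbrho\le 0$, $\barbrho^T\ba\le V_c$, the dual feasibility $\barbsig\ge 0$, $\barvsig\ge 0$, and the complementary slackness $(\barbrho\ot\barbrho-\barbrho)^T\barbsig=0$, $(\barbrho^T\ba-V_c)\barvsig=0$.

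Next I would observe that this system is exactly the KKT system of $\min\{P_u(\brho)\,|\;\brho\ot\brho-\brho\le 0,\ \brho^T\ba\le V_c\}$, that is of $\min_{\brho\in\real^n}\Phi(\brho)$, the components of $\barbzeta=\{\barbsig,\barvsig\}$ serving as multipliers for the two groups of inequality constraints; hence $\barbrho$ is a KKT point of $\Phi$. For $\Phi^d$ I would first establish the envelope identity $\nabla P^\Lam_u(\bzeta)=\Lam(\brho(\bzeta))$, where $\brho(\bzeta)=\frac12\bG^{-1}(\bzeta)\btau_u(\bzeta)$ is the (smooth, single-valued on $\calS_a$, since there $\bG$ is invertible) stationary branch used to define $P^\Lam_u$; the short computation substitutes $\tau_e=2\sigma_e\rho_e$ into $\partial P^\Lam_u/\partial\sigma_e$ and $\partial P^\Lam_u/\partial\vsig$ and yields $\rho_e^2-\rho_e$ and $\brho^T\ba-V_c$ respectively, i.e. the components of $\Lam(\brho)$. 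With this identity, the KKT conditions of $\Phi^d=P^\Lam_u-\Psi^*$ on the cone $\calE_a^*$ --- namely $\barbzeta\in\calE_a^*$, $\nabla P^\Lam_u(\barbzeta)=\Lam(\barbrho)\in\calE_a$, and $\barbzeta^T\Lam(\barbrho)=0$ --- coincide term by term with the system above, so $\barbzeta$ is a KKT point of $\Phi^d$; and $\barbrho=\brho(\barbzeta)$ by the stationarity relation $2\bG(\barbzeta)\barbrho=\btau_u(\barbzeta)$.

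Finally, for the equalities in (\ref{eq-cdp}): at the KKT point all three Fenchel--Young terms vanish, $\Psi(\Lam(\barbrho))=\Psi^*(\barbzeta)=\Lam(\barbrho)^T\barbzeta=0$, whence $\Xi(\barbrho,\barbzeta)=P_u(\barbrho)=P_u(\barbrho)+\Psi(\Lam(\barbrho))=\Phi(\barbrho)$; and since $\barbrho=\brho(\barbzeta)$ is the stationary point defining $P^\Lam_u$, $\Xi(\barbrho,\barbzeta)=P^\Lam_u(\barbzeta)-\Psi^*(\barbzeta)=\Phi^d(\barbzeta)$. I expect the only genuine difficulty to be the careful bookkeeping around the non-smooth indicator terms: justifying that ``KKT point of $\Xi$'' really delivers the stated complementarity conditions, that the ``sta'' in the definition of $\Phi^d$ singles out the branch $\brho(\barbzeta)$, and that $\barbzeta\in\calS_a$ so that $\bG(\barbzeta)^{-1}$ exists. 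The gradient computation for $P^\Lam_u$ itself is routine once the substitution $\tau_e=2\sigma_e\rho_e$ is made.
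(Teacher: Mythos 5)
Your proposal is correct and follows essentially the same route as the paper: both reduce the KKT conditions of $\Xi$ to the canonical duality (Fenchel--Young) relations, identify them with the KKT systems of $\Phi$ and $\Phi^d$, and obtain the equality (\ref{eq-cdp}) from the vanishing of the indicator and complementarity terms. Your version is in fact more explicit than the paper's, since you verify the envelope identity $\nabla P^\Lam_u(\bzeta)=\Lam(\brho(\bzeta))$ and flag the need for $\barbzeta\in\calS_a$ (invertibility of $\bG$), points the paper leaves implicit.
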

\begin{proof}
By the convexity of  $\Psi(\bxi)$, we have the following canonical duality relations:
 \eb \label{eq-cdr}
 \bzeta \in \partial \Psi(\bxi) \;\; \Leftrightarrow \;\; \bxi \in \partial \Psi^*(\bzeta)
 \;\; \Leftrightarrow \;\; \Psi(\bxi) + \Psi^*(\bzeta) = \bxi^T \bzeta,
 \ee
 where
 \[
 \partial \Psi(\bxi) = \left\{ \begin{array}{ll}
 \bzeta & \mbox{ if } \bxi \in\calE_a \\
 \emptyset & \mbox{ otherwise}
 \end{array} \right.
 \]
 is the sub-differential of $\Psi$. Thus, in terms of $\bxi = \Lam(\brho)$ and $\bzeta = \{ \bsig, \vsig\}$, the canonical duality relations (\ref{eq-cdr}) can be  equivalently written as
 \eb
 \brho \ot \brho - \brho \le 0   \;\; \Leftrightarrow \;\; \bsig \ge 0 \;\;
 \Leftrightarrow \;\; \bsig^T ( \brho \ot \brho - \brho )  = 0\;\; \label{eq-kkts}
 \ee
 \eb
 \brho^T \ba - V_c \le 0 \;\; \Leftrightarrow \;\; \vsig \ge 0 \;\;\Leftrightarrow \;\;
 \vsig(\brho^T \ba - V_c) = 0. \label{eq-kktv}
 \ee
These are exactly the KKT conditions for the inequality constraints $\brho \ot \brho - \brho \le 0  $ and $ \brho^T \ba - V_c \le 0$. Thus,   $(\barbrho, \barbzeta) $ is a KKT point of $\Xi$ if and only if
$\barbrho$ is a KKT point of $\Phi$, $\barbzeta$ is a KKT point of $\Phi^d$. The equality (\ref{eq-cdp}) holds due to the canonical duality relations in (\ref{eq-cdr}).
\hfill $\Box$
\end{proof}

Indeed, on the effective domain $\calE^*_a$ of $\Psi^*(\bzeta)$, the total complementary function
$\Xi$ can be written as
 \eb
 \Xi(\brho, \bsig, \vsig) = P_u(\brho ) + \bsig^T ( \brho \ot \brho - \brho) + \vsig (\brho^T \ba - V_c),
 \ee
which  can be considered as the Lagrangian of $(\calP)$ for the canonical constraint
$\Lam(\brho) \le 0 \in \real^{n+1}$.
The Lagrange multiplier $\bzeta = \{ \bsig, \vsig \} \in \calE^*_a$ must satisfy the KKT conditions in (\ref{eq-kkts}) and (\ref{eq-kktv}). By the complementarity condition
 $\bsig^T ( \brho \ot \brho - \brho ) = 0$ we know that $\brho \ot \brho = \brho  $ if $\bsig >  0$.
Let
\eb
\calS_a^+ = \{ \bzeta = \{ \bsig, \vsig\}\in  \calE_a^* | \;\;    \bsig >  0 \}.
\ee
Then for any given  $ \bzeta= \{ \bsig, \vsig\}  \in \calS^+_a$,  the function $\Xi(\cdot, \bzeta): \real^m \rightarrow \real$ is strictly convex,
the canonical dual function of $\PP_u$ can be well-defined by
\eb\label{eq-Pd}
\PP^d_u(\bzeta) = \min_{\brho \in\real^m} \Xi(\brho, \bzeta)
= - \frac{1}{4} \btau_u^T(\bzeta ) \bG^{-1}(\bzeta)  \btau_u(\bzeta )  - \vsig V_c .
\ee
Thus, the canonical dual problem of $(\calP)$ can be proposed as the following:
\eb
(\calP^d): \;\;\; \max \{ \PP^d(\bsig,\vsig) | \;\; (\bsig, \vsig) \in \calS^+_a \}.
\ee

\begin{theorem}[Analytical Solution]\label{thm-rho}
For any given $\bu \in \calU^m_a$, if $  \barbzeta $ is a  solution to $(\calP^d)$, then
\eb\label{eq-solu}
\barbrho =
\half \bG^{-1} (\barbzeta)  \btau_u(\barbzeta )
\ee
is a global optimal solution to $(\calP)$ and
\eb
\PP_u(\barbrho) = \min_{\brho \in \real^n} \PP_u(\brho) = \max_{\bzeta \in \calS^+_a } \PP^d_u(\bzeta)
=  \PP^d_u(\barbzeta).
\ee
\end{theorem}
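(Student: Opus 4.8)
The plan is to recover $\barbrho$ from the stationarity of the Gao--Strang total complementary function $\Xi$, to convert optimality of the dual maximizer $\barbzeta$ into primal feasibility and complementary slackness, and then to close the duality gap by the standard saddle estimate; the Complementary--Dual Principle and~(\ref{eq-Pd}) will carry most of the bookkeeping. First I would use that $\barbzeta \in \calS^+_a$ forces $\barbsig > 0$, so $\Xi(\cdot,\barbzeta)$ has the positive definite diagonal Hessian $2\bG(\barbzeta) = 2\Diag\{\barbsig\}$ and is strictly convex on $\real^n$; its unique critical point then solves $\nabla_\brho\Xi(\barbrho,\barbzeta) = 2\bG(\barbzeta)\barbrho - \btau_u(\barbzeta) = 0$, which is exactly~(\ref{eq-solu}), and $\Xi(\barbrho,\barbzeta) = \min_{\brho\in\real^n}\Xi(\brho,\barbzeta) = \PP^d_u(\barbzeta)$ by~(\ref{eq-Pd}).

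The crux is to translate optimality of $\barbzeta$ for $(\calP^d)$ into conditions on $\barbrho$. Differentiating $\PP^d_u = -\frac{1}{4}\btau_u^T(\bzeta)\bG^{-1}(\bzeta)\btau_u(\bzeta) - \vsig V_c$ componentwise and substituting $\bar\rho_e = \half(\btau_u(\barbzeta))_e/\bar\sigma_e$ from~(\ref{eq-solu}), I expect the identities $\partial_{\sigma_e}\PP^d_u(\barbzeta) = \bar\rho_e^2 - \bar\rho_e$ and $\partial_{\vsig}\PP^d_u(\barbzeta) = \barbrho^T\ba - V_c$, that is, $\nabla\PP^d_u(\barbzeta)$ reproduces the canonical measure $\Lam(\barbrho)$. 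Since the constraint $\bsig > 0$ is open, $\barbsig$ is interior and optimality forces $\bar\rho_e^2 - \bar\rho_e = 0$ for every $e$, hence $\barbrho \in \{0,1\}^n$ and (trivially) $\barbsig^T(\barbrho\ot\barbrho - \barbrho) = 0$; for the coordinate $\vsig \ge 0$ the KKT condition of the maximization gives either $\barvsig > 0$ with $\barbrho^T\ba - V_c = 0$, or $\barvsig = 0$ with $\barbrho^T\ba - V_c \le 0$, so in both cases $\barbrho^T\ba \le V_c$ and $\barvsig(\barbrho^T\ba - V_c) = 0$. Thus $\barbrho$ is feasible for $(\calP)$ and the complementarity relations~(\ref{eq-kkts})--(\ref{eq-kktv}) hold; equivalently, $(\barbrho,\barbzeta)$ is a KKT point of $\Xi$ and the Complementary--Dual Principle applies.

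To finish I would use the Lagrangian form of $\Xi$ valid on $\calE_a^*$: for any $\brho$ feasible for $(\calP)$ (so $\brho\ot\brho - \brho = 0$ and $\brho^T\ba - V_c \le 0$) and any $\bzeta = \{\bsig,\vsig\} \in \calS^+_a$,
\[
\Xi(\brho,\bzeta) = \PP_u(\brho) + \bsig^T(\brho\ot\brho-\brho) + \vsig(\brho^T\ba - V_c) \le \PP_u(\brho),
\]
since $\vsig \ge 0$, while $\Xi(\brho,\bzeta) \ge \min_{\brho'}\Xi(\brho',\bzeta) = \PP^d_u(\bzeta)$; this is the weak duality $\PP_u(\brho) \ge \PP^d_u(\bzeta)$. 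Evaluating at $(\barbrho,\barbzeta)$ and using the complementarity just proved gives $\Xi(\barbrho,\barbzeta) = \PP_u(\barbrho)$, hence $\PP_u(\barbrho) = \PP^d_u(\barbzeta) = \max_{\bzeta\in\calS^+_a}\PP^d_u(\bzeta)$, and weak duality then forces $\PP_u(\barbrho) \le \PP_u(\brho)$ for every feasible $\brho$. So $\barbrho$ is a global optimal solution of $(\calP)$ and the asserted chain of equalities follows, with $\min_{\brho\in\real^n}\PP_u$ understood through the relaxation $\Phi$, which is finite exactly on $\{\brho : \brho\ot\brho-\brho\le 0,\ \brho^T\ba\le V_c\}$ and whose minimum therefore agrees with that of $\PP_u$ over the feasible set of $(\calP)$.

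The step I expect to require the most care is the identity $\partial_{\sigma_e}\PP^d_u = \bar\rho_e(\bar\rho_e - 1)$, together with the boundary bookkeeping at $\barvsig = 0$: this is precisely the mechanism by which interior stationarity of a \emph{smooth, relaxed} dual pins the primal density to be \emph{exactly} $0$ or $1$ with no gray values --- the whole point of the theorem; the surrounding duality estimates are then routine.
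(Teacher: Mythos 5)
Your proof is correct, and while it shares the canonical-duality skeleton (the recovery formula $\barbrho=\half\bG^{-1}(\barbzeta)\btau_u(\barbzeta)$ from stationarity of $\Xi(\cdot,\barbzeta)$, and complementarity pinning $\barbrho$ to $\{0,1\}^n$), the way you close the argument is genuinely different from the paper. The paper's proof invokes concavity of $\PP^d_u$ on $\calS^+_a$, Theorem 1 to transfer the KKT property from $\barbzeta$ to $\barbrho$, and then a saddle-point minimax interchange $\min_{\brho}\max_{\bzeta}\Xi=\max_{\bzeta}\min_{\brho}\Xi$ to identify $\min\Phi$ with $\max\PP^d_u$, before using $\barbsig>0$ and (\ref{eq-kkts}) to conclude $\barbrho\in\{0,1\}^n$. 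You instead differentiate the dual function explicitly and verify the identities $\partial_{\sigma_e}\PP^d_u(\barbzeta)=\bar\rho_e^2-\bar\rho_e$ and $\partial_{\vsig}\PP^d_u(\barbzeta)=\barbrho^T\ba-V_c$ (both computations check out), so that interior stationarity in $\bsig$ and the one-sided condition at $\vsig\ge 0$ deliver integrality, feasibility and complementary slackness directly; you then finish with an elementary weak-duality estimate $\PP^d_u(\bzeta)\le\Xi(\brho,\bzeta)\le\PP_u(\brho)$ for feasible $\brho$ and $\bzeta\in\calS^+_a$, evaluated at $(\barbrho,\barbzeta)$. What your route buys is self-containedness and robustness: you never need the minimax interchange (which, as stated in the paper, implicitly relies on existence of the saddle point) nor the paper's claim that a KKT point of the merely concave $\PP^d_u$ is the \emph{unique} maximizer, and your reading of $\min_{\brho\in\real^n}\PP_u(\brho)$ through the relaxation $\Phi$ correctly repairs the slight abuse of notation in the theorem statement (a linear function has no minimum over all of $\real^n$). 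What the paper's route buys is brevity and a cleaner fit with the general canonical duality machinery already set up in Theorem 1, which it reuses wholesale rather than re-deriving the KKT bookkeeping coordinate by coordinate.
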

\begin{proof}
 It is easy to prove that   for any given  $\bu \in \calU^m_a$, the canonical dual function
 $\PP^d_u(\bzeta)$ is concave on the open convex set $\calS^+_a$.
 If $\barbzeta$ is a KKT point of $\PP^d_u(\bzeta)$, then it must be a unique global maximizer of
 $\PP^d_u(\bzeta)$ on $\calS^+_a$.
 By Theorem 1   we know that if
 $\barbzeta = \{ \barbsig , \barvsig \} \in \calS^+_a$ is a KKT point of $\Phi^d(\bzeta)$, then
 $\barbrho = \brho(\barbzeta)$ defined by (\ref{eq-solu}) must be a KKT point of $\Phi(\brho)$.
 Since $\Xi(\brho, \bzeta)$ is a saddle function on $\real^n \times \calS^+_a$,
we have
\begin{eqnarray*}
\min_{\brho \in \real^n} \Phi(\brho) &=&\min_{\brho \in \real^n} \max_{\bzeta \in \calS^+_a}
\Xi(\brho, \bzeta) =   \max_{\bzeta \in \calS^+_a}\min_{\brho \in \real^n}
\Xi(\brho, \bzeta)\\
& =& \max_{\bzeta \in \calS^+_a} \Phi^d(\bzeta) =  \max_{\bzeta \in \calS^+_a} \PP^d_u(\bzeta) ,
\end{eqnarray*}
Since $\barbsig > 0$,  the complementarity condition in (\ref{eq-kkts}) leads to
 \[
 \barbrho \ot \barbrho - \barbrho = 0 \;\;  \mbox{ i.e. } \barbrho \in \{ 0, 1 \}^n.
 \]
 Thus, we have
 \[
 \PP_u(\barbrho) = \min_{\brho \in \calZ_a } \PP_u(\brho) = \max_{\bzeta \in \calS^+_a } \PP^d_u(\bzeta)
 = \PP^d_u(\barbzeta)
 \]
 as required. \hfill $\Box$
 \end{proof}

 \begin{remark} Theorem \ref{thm-rho} shows that although the canonical dual problem is a concave maximization in continuous space,  it produces the analytical solution  (\ref{eq-solu})
to  the well-known integer Knapsack problem $(\calP_u)$! This analytical solution was first obtained  by Gao in 2007
for  general quadratic integer programming problems (see Theorem 3, \cite{gao-jimo07}).
 The indicator function of a convex set and its sub-differential were
 first introduced by J.J. Moreau in 1968  in his study on unilateral constrained problems
 in contact mechanics \cite{moreau68}. His pioneering work laid a foundation for modern
 analysis and the canonical duality theory.
  In solid mechanics, the indicator of a plastic yield condition is also called a
 {\em super-potential}. Its sub-differential leads to a general constitutive law
 and a unified pan-penalty finite element method in plastic limit analysis \cite{gao-cs88}.
 In mathematical programming, the canonical duality
 leads to a unified framework for nonlinear constrained
 optimization problems in multi-scale systems \cite{gao-dual00,gao-cace09,gao-aip16,gao-ruan-jogo10}.
\end{remark}

\section{Perturbed Canonical Duality Method and Algorithm}
Numerically speaking, although the global optimal solution of the integer programming problem $(\calP)$
can be  obtained by solving the canonical dual problem $(\calP^d)$,
the rate of convergence is very slow since  $\PP^d_u(\bsig,\vsig) $ is
nearly a linear function of $\bsig \in \calS^+_a$ when $\bsig $ is far from its origin. 
In order to overcome this  problem, a so-called $\beta$-perturbed canonical dual method
has been proposed  by Gao and Ruan in integer programming \cite{gao-ruan-jogo10},
i.e. by introducing a perturbation parameter $\beta > 0$,
 the problem $(\calP^d)$ is replaced by
 \eb
 (\calP^d_\beta): \;\; \max \left\{ \PP^d_\beta(\bsig, \vsig)  = \PP^d_u (\bsig, \vsig) -
  \frac{1}{4}  \beta^{-1} \bsig^T  \bsig | \;\; \{ \bsig, \vsig\} \in \calS^+_a \; \right\}
  \ee
  which is strictly concave on $\calS^+_a$.

\begin{theorem}
For a  given $\bu \neq 0 \in \real^m$ and $V_c > 0$,
there exists a $\beta_c > 0$ such that for any given $\beta \ge \beta_c$, the problem
$(\calP^d_\beta)$ has a unique solution $\bzeta_\beta \in \calS^+_a$.
If $\brho_\beta = \half \bG^{-1} (\bzeta_\beta) \btau_u(\bzeta_\beta ) \in\{ 0, 1\}^n$,
then  $\brho_\beta$ is a global optimal solution to $(\calP)$.
\end{theorem}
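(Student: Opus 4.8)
The plan is to split the statement into two parts: (i) existence and uniqueness of a maximizer $\bzeta_\beta$ of $\PP^d_\beta$ on $\calS^+_a$ for $\beta$ large enough, and (ii) the conclusion that the resulting $\brho_\beta$, when it happens to lie in $\{0,1\}^n$, is a global minimizer of $(\calP)$. Part (ii) is essentially a corollary of Theorem~\ref{thm-rho} and the complementarity relation (\ref{eq-kkts}), so the real work is in part (i).

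For part (i), the first step is to write $\PP^d_\beta$ explicitly: using (\ref{eq-Pd}) and $\bG(\bzeta)=\Diag\{\bsig\}$, on $\calS^+_a$ we have $\PP^d_\beta(\bsig,\vsig)= -\frac14\sum_{e=1}^n \tau_e(\bzeta)^2/\sigma_e - \vsig V_c - \frac14\beta^{-1}\bsig^T\bsig$, with $\tau_e(\bzeta)=\sigma_e - \vsig a_e + c_e$. Each term $-\tau_e^2/\sigma_e$ is concave in $(\sigma_e,\vsig)$ on $\{\sigma_e>0\}$ (it is minus a quadratic-over-linear form, hence concave), the term $-\vsig V_c$ is linear, and $-\frac14\beta^{-1}\sigma_e^2$ is strictly concave in $\sigma_e$; summing, $\PP^d_\beta$ is strictly concave on the open convex set $\calS^+_a$, which already gives uniqueness of a maximizer if one exists. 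The second step is existence, and this is where the perturbation is needed: I would show $\PP^d_\beta$ is coercive on $\calS^+_a$ in the sense that $\PP^d_\beta(\bsig,\vsig)\to -\infty$ as $\|(\bsig,\vsig)\|\to\infty$ within $\calS^+_a$ or as $\bsig$ approaches the boundary $\sigma_e\to 0^+$. The $-\frac14\beta^{-1}\bsig^T\bsig$ term controls the direction $\|\bsig\|\to\infty$; the $-\vsig V_c$ term (with $V_c>0$) controls $\vsig\to\infty$ for $\bsig$ bounded; and as $\sigma_e\to 0^+$ with $\tau_e$ bounded away from $0$, the term $-\tau_e^2/\sigma_e\to-\infty$. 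One must handle the joint regime where $\sigma_e\to 0$ and $\tau_e\to 0$ simultaneously; here the constraint $\bu\neq 0$ enters, guaranteeing $\bc_u=\bc(\bu)\neq 0$ so that not all $\tau_e$ can vanish along an escaping sequence with $\vsig$ bounded, and if $\vsig\to\infty$ the linear penalty dominates. Passing to a maximizing sequence and using that a strictly concave upper-semicontinuous coercive function on a nonempty open convex set attains its supremum at an interior critical point yields $\bzeta_\beta\in\calS^+_a$.

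The threshold $\beta_c$ enters precisely in ensuring the supremum is not attained "at infinity": I would make the coercivity estimate quantitative, bounding $\PP^d_\beta$ above on $\calS^+_a$ by something like $-\frac14\beta^{-1}\|\bsig\|^2 + M\|\bsig\| + (\text{terms in }\vsig)$ where $M$ depends on $\bc_u$ and $\ba$; then for any fixed $\beta>0$ this is already bounded above, but to guarantee that the maximizer lies in the \emph{interior} region $\bsig>0$ (rather than forcing degeneracy) one needs $\beta$ large enough that the perturbation does not push the optimizer toward the boundary — equivalently, $\beta_c$ is chosen so that the unperturbed stationarity system $\sigma_e = \frac12|\tau_e|$ combined with the perturbation term $\sigma_e=\frac12|\tau_e|/(1+\beta^{-1}\sigma_e\cdot(\ldots))$ still admits a solution with all $\sigma_e>0$. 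Concretely I expect $\beta_c$ to be expressible in terms of $\|\bc_u\|$, $\|\ba\|$, and $V_c$.

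The main obstacle is the degenerate boundary behaviour of $\PP^d_u$ near $\sigma_e=0$ when $\tau_e(\bzeta)$ also tends to $0$: the quadratic-over-linear terms are not uniformly coercive there, and one must argue that such a coalescing cannot occur at the supremum — this is exactly what the hypotheses $\bu\neq 0$ (so $\bc_u\neq 0$) and $V_c>0$ are for, and articulating that interplay cleanly is the delicate point. Once $\bzeta_\beta\in\calS^+_a$ is secured, part (ii) is immediate: $\bzeta_\beta$ is then a KKT point of $\PP^d_\beta$, and since $\sigma_e>0$ for all $e$, the stationarity in $\brho$ gives $\brho_\beta=\frac12\bG^{-1}(\bzeta_\beta)\btau_u(\bzeta_\beta)$, and the hypothesis $\brho_\beta\in\{0,1\}^n$ together with $\brho_\beta^T\ba\le V_c$ (which follows from $\vsig\ge 0$ and the corresponding complementarity, exactly as in (\ref{eq-kktv})) means $\brho_\beta$ is feasible for $(\calP)$; finally the Complementary-Dual Principle (Theorem~1) and the saddle argument of Theorem~\ref{thm-rho}, applied with the perturbed total complementary function, show $\PP_u(\brho_\beta)=\min_{\brho\in\calZ_a}\PP_u(\brho)$, so $\brho_\beta$ is globally optimal.
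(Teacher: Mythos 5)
Your split is reasonable, and your part (ii) is essentially what the paper does (it defers to Theorem 2 and to the Gao--Ruan reference), but your existence argument in part (i) --- precisely where you depart from the paper --- has a genuine gap. The coercivity claim ``$\PP^d_\beta\to-\infty$ as $\bsig$ approaches the boundary $\sigma_e\to 0^+$'' is false exactly in the regime you flag: the $e$-th term of $\PP^d_\beta$ is $-\frac{(\sigma_e-\theta_e)^2}{4\sigma_e}-\frac{\sigma_e^2}{4\beta}$ with $\theta_e=\vsig a_e-c_e$, and if $\theta_e\to 0$ together with $\sigma_e\to 0^+$ this term stays bounded (it tends to $0$), so the supremum can be approached on the boundary of $\calS^+_a$; in the degenerate case $\theta_e=0$ it actually is, since the $e$-th term then equals $-\sigma_e/4-\sigma_e^2/(4\beta)$, strictly decreasing in $\sigma_e>0$, so no interior maximizer exists in that component. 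Your proposed fix --- $\bu\neq 0$, hence $\bc_u\neq 0$, so ``not all $\tau_e$ can vanish'' --- does not repair this, because the degeneracy is componentwise: it occurs whenever the optimal multiplier satisfies $\vsig=c_e/a_e$ for a single element $e$, regardless of the other components of $\bc_u$. As written, your argument therefore does not establish that the maximizer lies in the open set $\calS^+_a$, and your discussion of $\beta_c$ remains a conjecture (no threshold is actually produced); what is needed is either a nondegeneracy hypothesis $\theta_e\neq 0$ for all $e$ at the optimum, or a quantitative argument that large $\beta$ excludes the degenerate configuration.

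For comparison, the paper does not argue by coercivity at all: it asserts strict concavity (as you do, which gives uniqueness) and then obtains existence constructively, by writing the stationarity system $\nabla\PP^d_\beta=0$ as the canonical dual algebraic equations $4\beta^{-1}\sigma_e^3+\sigma_e^2=(\vsig a_e-c_e)^2$ together with a scalar equation in $\vsig$, and invoking the closed-form (Cardano-type) unique positive root $\sigma_e>0$ of the cubic for every $e$ with $\theta_e\neq 0$, plus the explicit formula for $\vsig$; the remaining optimality claim is referred to the Gao--Ruan paper. Note that the paper's own construction also requires $\theta_e\neq 0$, so the difficulty you identified is real; the difference is that the paper localizes it explicitly through the algebraic solution, whereas your soft variational route would have to prove that same nondegeneracy (or circumvent it via the choice of $\beta_c$) rather than appeal to $\bu\neq 0$.
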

\begin{proof} It is easy to show that for any given $\beta > 0$,
$\PP^d_\beta(\bzeta)$ is strictly concave on the open convex set $\calS^+_a$, i.e. $(\calP^d_\beta)$
has a unique solution. Particularly,
  the criticality condition $\nabla \PP^d_\beta(\bzeta) = 0 $
leads to the the following
canonical dual algebraic equations:
\eb
4  \beta^{-1} \sig_e^3 + \sig_e^2 = (\vsig \aa_e - \cc_e)^2, \;\; e = 1, \dots, n, \label{eq-cdas}
\ee
\eb
\sum_{e=1}^n \half \frac{\aa_e}{\sig_e} ( \sig_e - \aa_e   \vsig + \cc_e) - V_c = 0 .\label{eq-cdv}
\ee
It was proved in \cite{gao-dual00} that for any given
$\beta > 0$ and $\theta_e = \vsig \aa_e - \cc_e  \neq 0, \;\;  e = 1, \dots, n$,
 the canonical dual algebraic equation (\ref{eq-cdas}) has a unique
 positive real solution
 \eb
\sigma_e  =  \frac{1}{6} \beta   [- 1 +  \phi_e(\vsig  ) + \phi_e^c(\vsig  )] > 0 , \;\; e = 1, \dots, n
\label{eq-solus}
\ee
where
\[
\phi_e(\vsig )  = \eta^{-1/3} \left[2 \theta_e^2 - \eta + 2 i  \sqrt{ \theta_e^2(\eta - \theta_e^2)}\right]^{1/3} ,
 \;\; \eta = \frac{\beta^2}{27 },
\]
and $\phi_e^c $ is   the complex conjugate of $\phi_e $, i.e.
 $\phi_e  \phi_e^c  = 1  $.
 Thus, the canonical dual algebraic equation (\ref{eq-cdv}) has a unique solution
 \eb\label{eq-soluvs}
 \vsig = \frac{\sum_{e= 1}^n \aa_e ( 1+ \cc_e/\sig_e) - 2 V_c}{\sum_{e=1}^n \aa_e^2/\sig_e}  .
 \ee
This shows that the perturbed canonical dual problem $(\calP^d_\beta)$ has a unique solution in $\calS^+_a$, which can be analytically obtained by (\ref{eq-solus}) and (\ref{eq-soluvs}).
The rest proof of this theorem is similar to that given in \cite{gao-ruan-jogo10}.
\hfill $\Box$
\end{proof}


Theoretically speaking, for any given $V_c < V_o$, the perturbed canonical duality method can produce desired  optimal solution to the integer constrained problem $(\calP)$. However, if $V_c \ll V_o$, to  reduce the initial volume $V_o$ directly to $V_c$  by solving the bi-level topology optimization problem $(\calP_{bl})$ may lead to unreasonable solutions. In order to resolve this problem, a volume decreasing  control parameter $\mu \in ( V_c/V_o,  1)$  is  introduced to slowly reduce the volume in the iteration. Thus, based on the above strategies, the canonical duality algorithm (CDT) for solving the general topology optimization problem $(\calP_{bl})$ can be proposed  below.

\begin{Algorithm}[Canonical Dual Algorithm for Topology Optimization (CDT)] $\;$ \\
{\em
\begin{verse}
(I)  Initialization.
Let $\brho^0 = \{1\} \in \real^n$.
Find $\u^0$ by solving the sub-level optimization problem
\eb
  \u^0  = \arg \min\{ \Pi(\bu, \brho^0)  | \;\; \bu \in \calU_a \}.
\ee
Compute $\bc^0 = \bc(\bu^0)$  according to~\eqref{eq-cu}. Define an initial value  $\vsig_0 > 0$ and an initial volume  $ V_\gamma \in [V_c,  V_o)$. Let $\gamma = 0, \; k=1$.\\

(II)  Find $\bsig_k = \{\sigma_e^{k } \} \in \real^n$ by
\[
\sigma_e^{k } =  \frac{1}{6} \beta   [- 1 +  \phi(\vsig^{k-1 } ) + \phi^c(\vsig^{k-1 } )] , \;\; e = 1, \dots, n.
\]

(III) Find $\vsig^{k }$ by
\[
\vsig^{k } = \frac{ \sum_{e=1}^n  \aa_e (1 +  \cc^\gamma_e /\sigma_e^{k} ) - 2 V_\gamma }{\sum_{e=1}^n \aa_e^2/  \sigma_e^{k}.    }
\]

(IV) Find $\brho^k$ by
\[
\rho^{k }_e   = \frac{1}{2} [ 1 - ( \vsig^{k } \aa_e - \cc^\gamma_e)/\sigma_e^k],
\;\; e= 1, \dots, n.
\]

(V) If
\[
|C(\brho^k, \u^\gamma)  - C( \brho^{k-1}, \u^\gamma) |  \le \omega_1,
 \]
and $\sum_{e=1}^n \rho_e^k \aa_e  \le  V_\gamma$, let $\brho^{\gamma} = \brho^k$, go to (VI); otherwise,  let $k=k+1$, go to (II).\\

(VI)
Find $\u^{\gamma}$ by solving
\eb
 \u^{\gamma} = \arg \min \{ \Pi(\bu, \brho^\gamma) | \;\; \bu \in \calU_a \} \label{eq-ugamma}
\ee

(VII)  Convergence test:
If
  \[
  |C(\brho^\gamma, \u^\gamma)- C(\brho^{\gamma-1}, \u^{\gamma-1}) | \le \omega_2, \;\;   V_\gamma  \le V_c
  \]
then stop; \\
otherwise, let $V_{\gamma+1} = \mu V_\gamma \ge V_o$ and computing
  $\bc^{\gamma+1} = \bc(\bu^\gamma)$,
Let  $\gamma = \gamma+1$, $k = 1$, go to (II).
\end{verse}
}
\end{Algorithm}
The penalty parameter in this algorithm is usually taken  $\beta > 10$.  
For linear elastic materials, the lower-level optimization (\ref{eq-ugamma}) in the algorithm (CDT) can be simply replaced by
$
u^{\gamma} = \bK^{-1} (\brho^\gamma) \bff(\brho^\gamma).
$
\section{Numerical Examples for Linear Elastic Structures}
The proposed semi-analytic method  is  implemented in Matlab.  
For the purpose of  illustration, the  applied  load and geometry data are chosen as dimensionless.
Young's modulus and Poisson's ratio of the  material are taken as $E = 1$ and $\nu = 0.3$, respectively.
The volume fraction is $\mu_c = V_c/V_0 = 0.6$. 
The stiffness matrix of the structure in CDT algorithm  is given by
$\bK (\brho) = \sum_{e=1}^n [ E_{min}  + (E-E_{min} ) \rho_e ]\bK_e$  where 
  $E_{min} = 10^{-9}$ in order to avoid singularity in computation..
The evolutionary rate used in the CDT is $\mu=0.975$.
To compare with the SIMP approach, the well-known   88-line algorithm proposed by Andreassen et al~\cite{Andreassen2011} is used with  the 
parameters  penal $=3$,
rmin = 1.5,  ft=1.

\subsection{MBB  Beam Problem}

The   well-known benchmark Messerschmitt-B\"{o}lkow-Blohm (MBB) beam problem in topology optimization
  is selected  as the first  test example    (see  Fig.~\ref{example-mbb}).
The design domain is discretized with $180 \times 60$ square mesh elements. 
Computational results obtained by both CDT and SIMP are reported in  Tables~\ref{example-mbb-nofilter}.
\begin{figure}[H]
  \centering
  \includegraphics[width=0.6\textwidth]{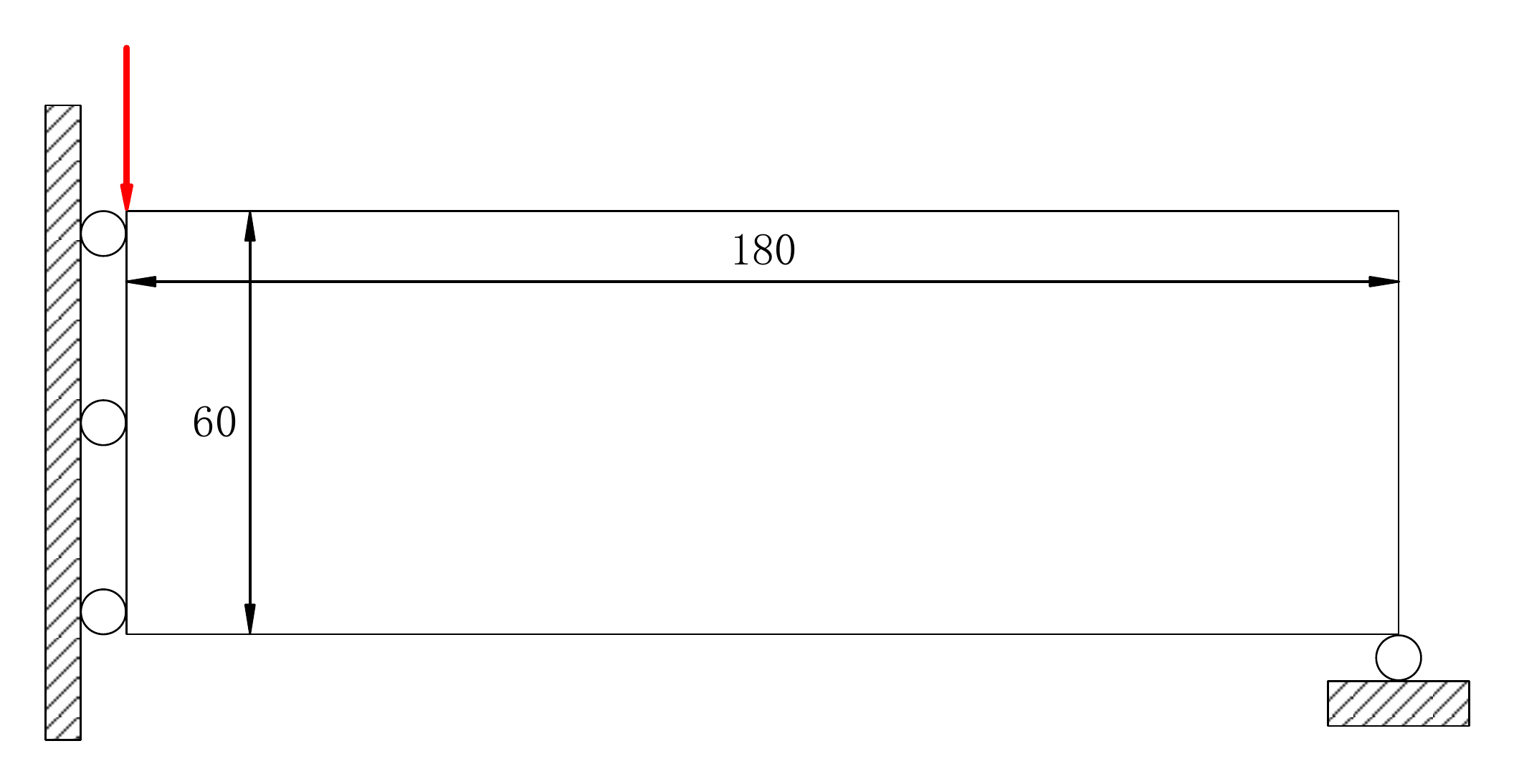}
  \caption{The design domain, boundary conditions and external load  for   a MBB beam}
  \label{example-mbb}
\end{figure}

\begin{table}
\centering
\caption{The comparison between the SIMP and CDT.}
\label{example-mbb-nofilter}
\begin{tabular}{|c|c|c|c|}
\hline
Method & Structures & Steps & Compliace \\
\hline
SIMP & \includegraphics[width=0.6\textwidth]{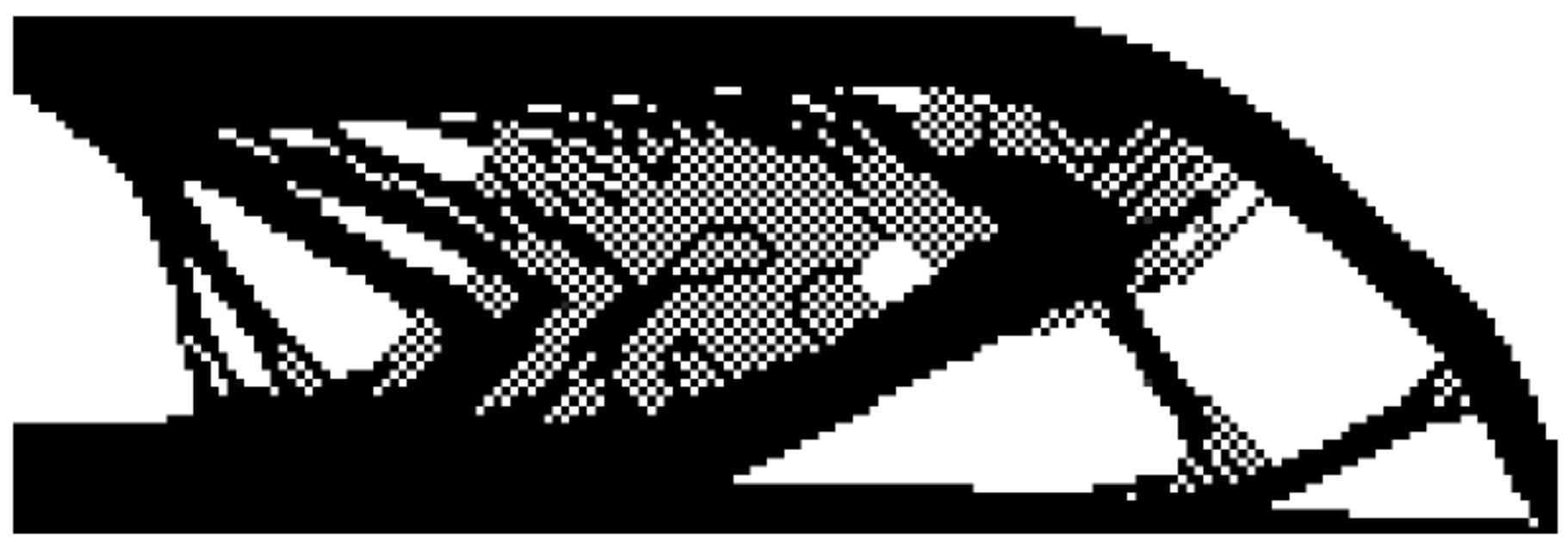} & 41 &169.2908 \\
\hline
CDT  & \includegraphics[width=0.6\textwidth]{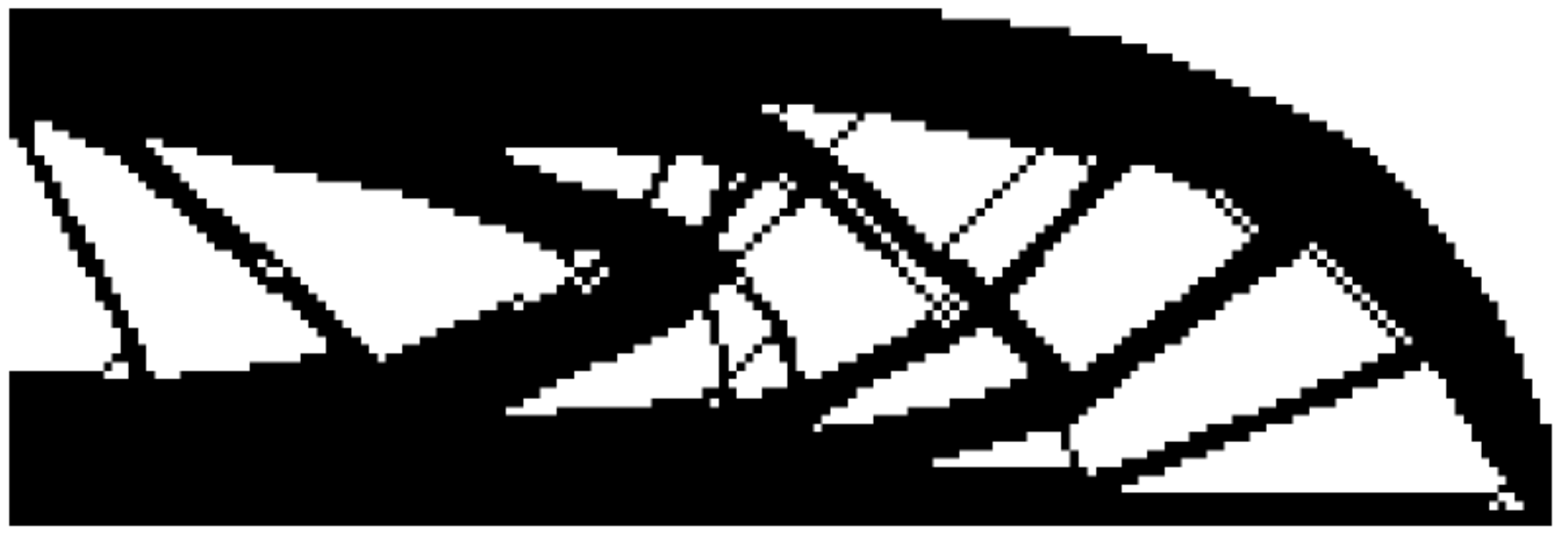}  & 28 & 164.7108 \\
\hline
\end{tabular}

\end{table}

\subsection{ Cantilever Beam}

The second  test example is the  classical Cantilever problem (see Figure~\ref{fig_CT_problem}).
 The beam  is fixed along its left side with a downward traction  applied at its right middle point.
  The example consists of $180\times 60$ quad meshes and the target volume fraction is $\mu_c = 0.6$.
  Numerical  results by both the CDT  and  SIMP   are shown in Figure~\ref{fig_CT_results}.
   
\begin{figure}
  \centering
  \includegraphics[width=0.7\textwidth]{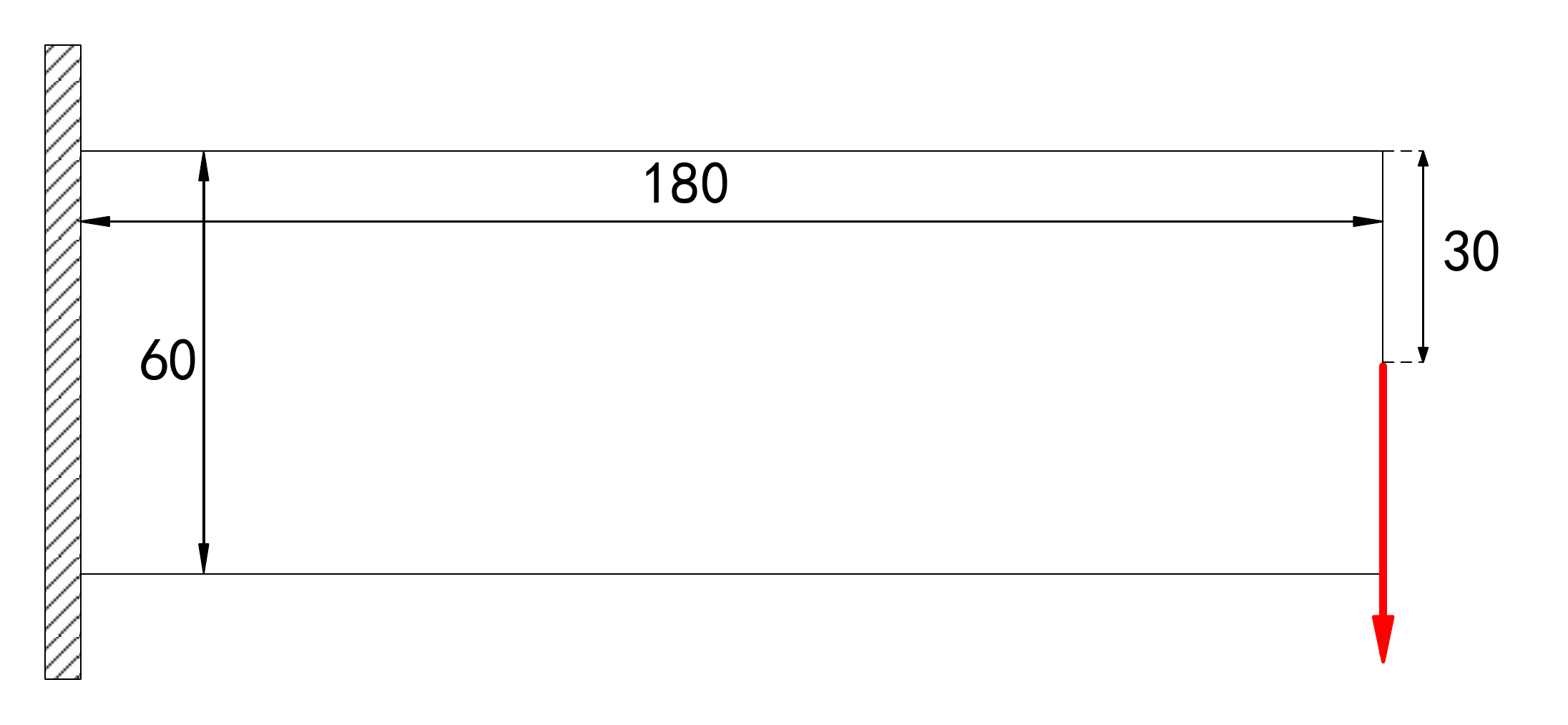}
  \caption{A test example of the benchmark Cantilever problem at volume fraction of 0.5.}
  \label{fig_CT_problem}
\end{figure}

\begin{figure}
  \centering
   \subfigure[SIMP without filter: compliance = 152.7490 with 37 iterations]{\includegraphics[width=0.7\textwidth]{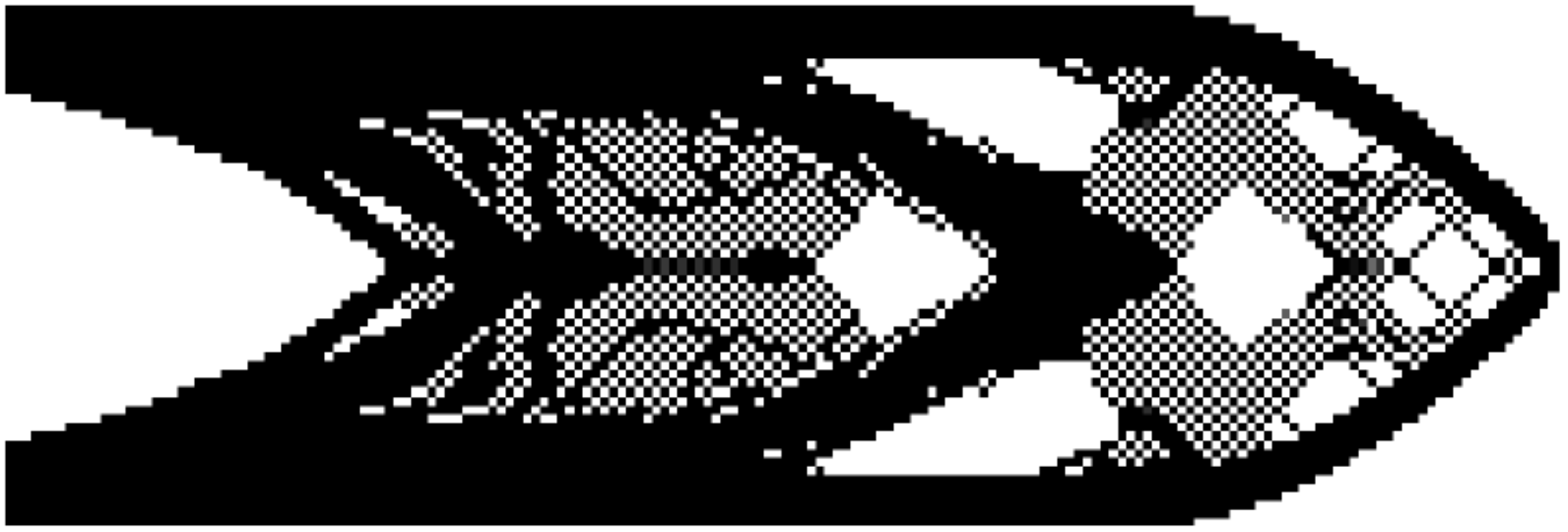}} 
   \subfigure[CDT: compliance = 153.6767 with 23 iterations]{\includegraphics[width=0.7\textwidth]{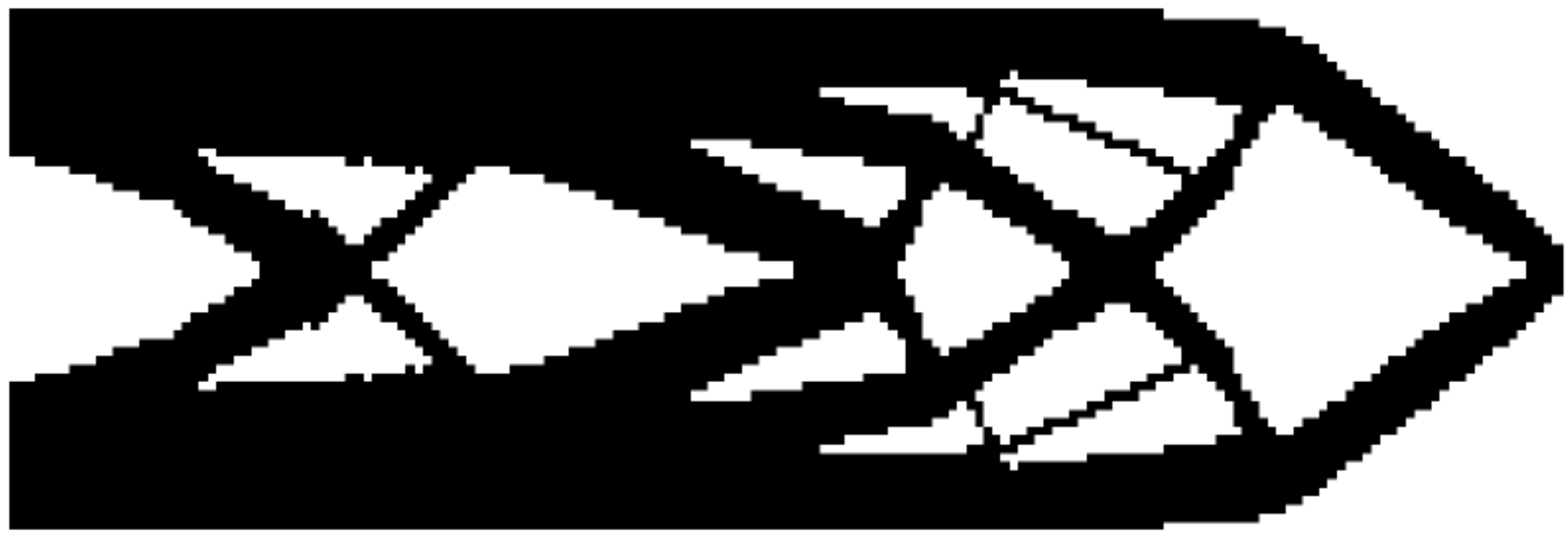}}

  \caption{ Topology optimization for the cantilever beam by the   SIMP  (a) and  CDT   (b) methods. }\label{fig_CT_results}
\end{figure}

\subsection{Summary of Computational Results} 
The computational results for the above benchmark problems show clearly
that without   filter,  the  SIMP produces a large range of checkerboard patterns and gray elements,
 while by  the CDT method, precise void-solid optimal structure can be obtained with very few checkerboard patterns.
  By the fact that the optimal density distribution $\brho$ can be obtained analytically at each iteration, the CDT method produces desired optimal structure within much less computing time.  
The convergence of the CDT method depends mainly on the parameter $\mu \in [\mu_c, 1)$. Generally speaking, the smallar  $\mu $
produces fast convergent   but   less optimal results. Detailed study on this issue will be addressed in the future research.  

\begin{acknowledgement}
MATLAB code for the CDT algorithm was helped by Professor M. Li from Zhejiang University.
The  research  is supported by US Air Force  
 Office of  Scientific Research
 under grants FA2386-16-1-4082 and FA9550-17-1-0151.
\end{acknowledgement}


\end{document}